\newtheorem{Prop}{Proposition}
\begin{document}

\title{{Listen-and-Talk: Protocol Design and Analysis for Full-duplex Cognitive Radio Networks}}

\author{
Yun~Liao,~\IEEEmembership{Student~Member,~IEEE}, Tianyu~Wang,~\IEEEmembership{Student~Member,~IEEE}, Lingyang~Song,~\IEEEmembership{Senior~Member,~IEEE}, and Zhu~Han~\IEEEmembership{Fellow, IEEE}
\thanks{
This manuscript has been accepted by IEEE Transactions on Vehicular Technology.
}
\IEEEcompsocitemizethanks{
\IEEEcompsocthanksitem Y.~Liao, T.~Wang, and L.~Song are with the State Key Laboratory of Advanced Optical Communication Systems and Networks, School of Electronics Engineering and Computer Science, Peking University, Beijing, China (email: \{yun.liao, tianyu.alex.wang, lingyang.song\}@pku.edu.cn).
\IEEEcompsocthanksitem Z. Han is with Electrical and Computer Engineering Department, University of Houston, Houston, TX, USA (email: zhan2@uh.edu).
}
}

%
%

\maketitle
\vspace{-3em}
\begin{abstract}
In traditional cognitive radio networks, secondary users~(SUs) typically access the spectrum of primary users~(PUs) by a two-stage ``listen-before-talk''~(LBT) protocol, i.e., SUs sense the spectrum holes in the first stage before transmitting in the second. However, there exist two major problems: 1) transmission time reduction due to sensing, and 2) sensing accuracy impairment due to data transmission. In this paper, we propose a ``listen-and-talk''~(LAT) protocol with the help of full-duplex~(FD) technique that allows SUs to simultaneously sense and access the vacant spectrum. Spectrum utilization performance is carefully analyzed, with the closed-form spectrum waste ratio and collision ratio with the PU provided. Also, regarding the secondary throughput, we report the existence of a tradeoff between the secondary transmit power and throughput. Based on the power-throughput tradeoff, we derive the analytical local optimal transmit power for SUs to achieve both high throughput and satisfying sensing accuracy. Numerical results are given to verify the proposed protocol and the theoretical results.
\end{abstract}

\begin{IEEEkeywords}
Cognitive radio, full-duplex, listen-and-talk, residual self-interference.
\end{IEEEkeywords}

\section{Introduction}%

With the fast development of wireless communication, spectrum resources have become increasingly scarce, motivating the development of technologies such as D2D communications~\cite{nishiyama2014relay} to improve spectrum utilization of the crowded spectrum bands. Meanwhile, as an early study by Federal Communications Commission suggests, some of the allocated spectrum is largely under-utilized in vast temporal and geographic dimensions~\cite{FCC2002report, SSC2010report}. Cognitive radio, focusing on these bands, has attracted wide attentions over the past years as a promising solution to the spectrum reuse~\cite{mitola1999cognitive, mitola2000cognitive}. In cognitive radio networks~(CRNs), unlicensed secondary users~(SUs) are allowed to access spectrum bands of the licensed primary users~(PUs) by two spectrum sharing approaches: underlay and overlay\cite{akyildiz2006next}. In underlay spectrum sharing, the SUs are allowed to operate if the interference caused to PUs is below a given threshold with proper resource management~\cite{zhou2015game, goldsmith2009breaking}. Overlay spectrum sharing, which is adopted in this paper, refers to the spectrum utilize technique that allows the SUs to access only the empty spectrum for PUs\cite{sankaranarayanan2005bandwidth}. Thus, reliable identification of spectrum holes is required to protect the PUs and maximize SUs' throughput\cite{kim2008efficient}.

\vspace{-0.5em}
\subsection{Conventional Cognitive Radio Protocols}
Most existing works on overlay CRNs employ ``listen-before-talk''~(LBT) protocol on half-duplex~(HD) radio, in which the traffic of SUs is time-slotted, and each slot is divided into two sub-slots, namely sensing sub-slot and transmission sub-slot. SUs sense the target channel in the sensing sub-slot to decide whether to access the spectrum in the following transmission sub-slot \cite{zhao2007decentralized, yucek2009survey, zhao2007optimal, liang2008sensing, huang2008short, huang2009optimal, mishra2006cooperative, cai2014optimal, gao2012security}. In \cite{zhao2007optimal, liang2008sensing, huang2008short}, optimization of sensing and transmission duration has been discussed. In \cite{huang2009optimal}, the authors considered general PU idle time distributions and imperfect sensing, and provide a tight upper bound of the performance in the LBT protocol. Cooperative spectrum sensing has been studied in \cite{mishra2006cooperative, cai2014optimal, gao2012security} to achieve better sensing performance. Though the conventional HD based LBT protocol is proved to be effective, it actually dissipates the precious resources by employing time-division duplexing, and thus, unavoidably suffers from two major problems as follow.
\begin{enumerate}
\item The SUs have to sacrifice the transmit time for spectrum sensing, and even if the spectrum hole is long and continuous, the data transmission need to be split into small discontinuous slots;
\item During the transmission sub-slots, the SUs do not sense the spectrum. Thus, if the PUs arrive or leave during the transmission sub-slots, SUs cannot be aware until the next sensing sub-slot, which leads to long collision (when the PUs arrive) and spectrum waste (when the PUs leave).
\end{enumerate}

\vspace{-0.5em}
\subsection{Utilizing Full-duplex Technique in CRNs}
A more efficient way to utilize the spectrum holes and protect the PU network should allow the SUs to keep sensing the spectrum all the time. Whenever a spectrum hole is detected, SUs begin transmission, and once the PU arrives, the transmission ceases. This can be facilitated by full-duplex~(FD) techniques~\cite{bharadia2013full}. In a FD system, a node can transmit and receive using the same time and frequency resources. However, due to the close proximity of a given modem's transmitting antennas to its receiving antennas, strong self-interference introduced by its own transmission makes decoding process nearly impossible, which had been a huge impediment to the development of FD communications in the past. Recently, there has been significant progress in the self-interference cancelation including proper hardware design and signal process techniques, presenting great potential for realizing the FD communications for the future wireless networks~\cite{bharadia2013full, kiessling2004mutual, jain2011practical, choi2010achieving}.

Motivated by the FD techniques, in this paper, we propose a ``listen-and-talk''~(LAT) protocol, by which SUs can simultaneously perform spectrum sensing and data transmission \cite{mypaper}. We assume that the PU can change its state at any time and each SU has two antennas working in FD mode. Specifically, at each moment, one of the antennas at each SU senses the target spectrum band, and judges if the PU is busy or idle; while the other antenna transmits data simultaneously or keeps silent on basis of the sensing results.

\vspace{-0.5em}
\subsection{Related Works}
We remark that the ideas of this kind have also been mentioned by some other recent works\cite{cheng2011full, afifi2013exploiting, ahmed2012simultaneous, riihonen2014energy, zheng2013full, choi2012beyond}. Some of the papers such as \cite{choi2012beyond} have mentioned the simultaneous sensing and transmission briefly as a feasible application scenario of FD technology without further analysis, while some other papers study the similar topics. We concisely summarize them as follow.

Some works focus on deploying FD radios on CR users and the impact of some physical issues leading to residual self-interference and imperfect sensing \cite{ahmed2012simultaneous, riihonen2014energy, cheng2011full}. Specifically, \cite{ahmed2012simultaneous} discussed the use of directional antennas, and showed that directionality of multi-reconfigurable antennas could increase both the range and rate of full-duplex transmissions over omni-directional antenna based full-duplex transmissions; \cite{riihonen2014energy} focused on comparing sensing error probabilities in the half-duplex, two-antenna full-duplex, and single-antenna full-duplex cognitive scenarios under energy detection; and the impact of some physical issues leading to residual self-interference and imperfect sensing such as bandwidth, antenna placement error, and transmit signal amplitude difference was discussed in \cite{cheng2011full}

In \cite{afifi2013exploiting}, the authors considered multiple SU links with partial/complete self-interference suppression capability, and they could operate in either simultaneous transmit-and-sense~(TS) or simultaneous transmit-and-receive~(TR) modes. Mode selection between the TS and TR mode and the coordination of SU links were proposed to achieve high secondary throughput. The idea of the TS mode is similar to our protocol. However, in \cite{afifi2013exploiting}, one fixed threshold for energy detection was used in both SO and TS modes, while in our work, a pair of sensing thresholds are designed to compensate for the imperfect self-interference cancelation. Besides, the authors in \cite{afifi2013exploiting} only provided calculation of error sensing probabilities in series expressions in the analysis, and failed to present how well can the SUs utilize the spectrum holes, which is addressed in our work.

The authors in \cite{zheng2013full} considered cooperation between primary and secondary systems. In their model, the cognitive base station~(CBS) relays the primary signal, and in return it can transmit its own cognitive signal. The CBS was assumed to be FD enabled with multiple antennas. Beamforming technique was used to differentiate the forwarding signal for primary users and secondary transmission.

Different from all the above works, throughout the paper, we focus on the following important issues:
\begin{itemize}
\item How to design the sensing strategy so that the benefits of the FD can be fully enjoyed?
\item How to design the secondary transmit parameter, e.g., transmit power, so that SUs can achieve high throughput as well as satisfactory sensing performance?
\item How well can the proposed LAT perform in terms of spectrum utilization efficiency and secondary throughput?
\end{itemize}

We explore answers to these questions by both theoretical analysis and simulation results. The main contributions of this paper can be summarized below.

\begin{itemize}
\item We clearly present the idea of simultaneous sensing and transmission, and design a ``listen-and-talk'' protocol indicating when and with what power should a SU access the spectrum, and how to set the detection threshold.

\item We present theoretical analysis of the sensing performance and the spectrum utilization. Especially, the closed-form expressions of the collision ratio at the primary network and the spectrum waste ratio are provided.

\item We report a power-throughput tradeoff, show the existence of a local optimal transmit power, with which the SUs can achieve high throughput as well as satisfying sensing performance, and derive the theoretical expression of the local optimal transmit power.
\end{itemize}

The rest of the paper is organized as follows. Section II describes the system model and the concept of simultaneous sensing and transmission. In Section III, we elaborate the proposed LAT protocol and discuss the design of important parameters. In Section IV, we investigate the analytical performance, including the spectrum utilization efficiency and secondary throughput. Also, a power-throughput tradeoff is reported and analyzed. Simulation results are presented to verify the analytical results in Section V. We conclude the paper in Section VI.

\section{System Model}%

In this section, the system model of the overall network is presented, and the concept of simultaneous sensing and transmission under imperfect self-interference suppression is elaborated.

\vspace{-0.5em}
\subsection{System model}

\begin{figure*}[!t]
\vspace{-1em}
\centering
\includegraphics[width=6.8in]{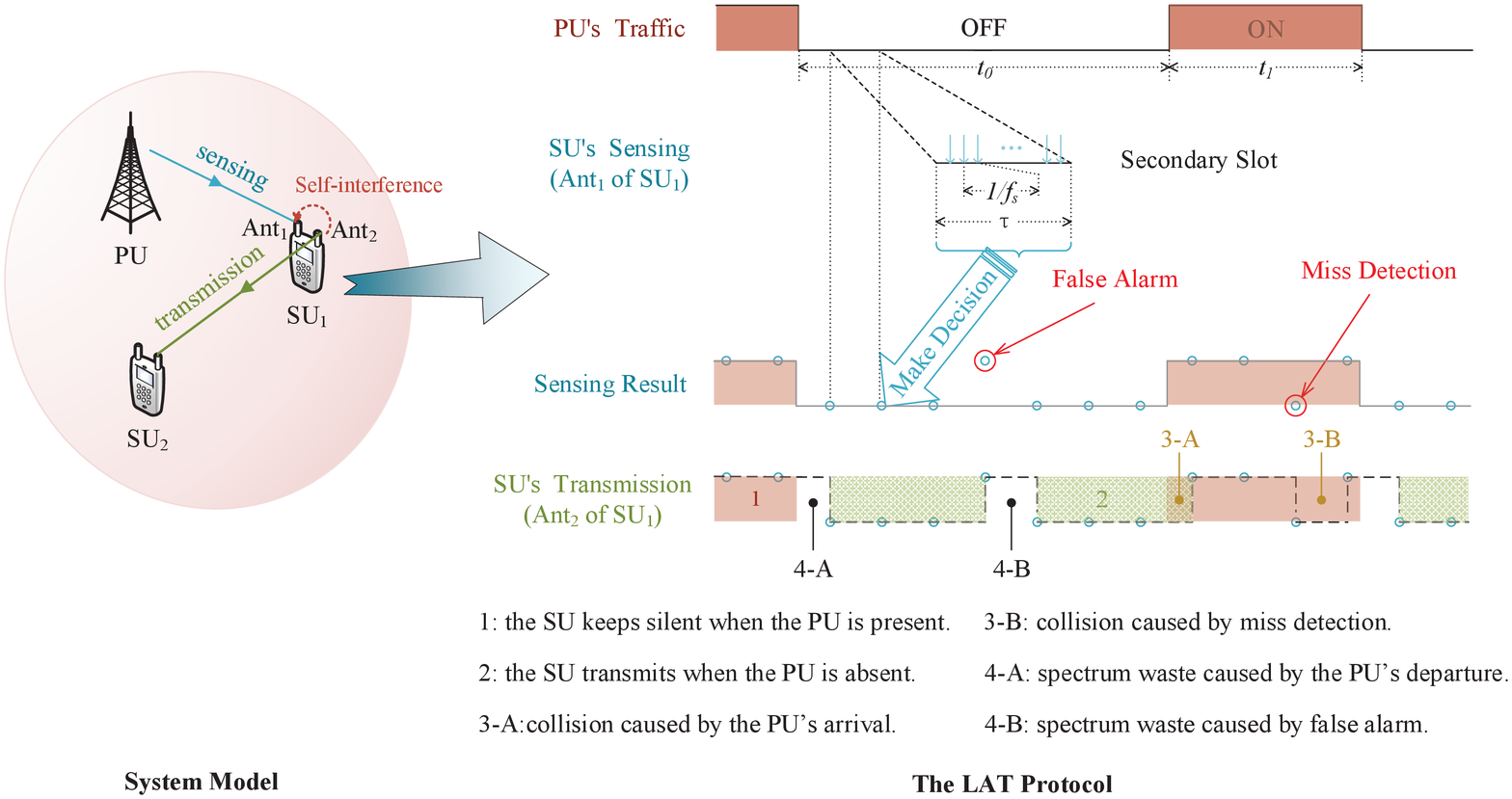}
\vspace{-1.5em}
\caption{System model: Listen-and-Talk.} \label{model}
\vspace{-1em}
\end{figure*}

We consider a CR system consisting of one PU and one SU pair as shown in the left part of Fig.~\ref{model}, in which SU$_1$ is the secondary transmitter and SU$_2$ is the receiver. Each SU is equipped with two antennas Ant$_1$ and Ant$_2$, where Ant$_1$ is used for data reception, while Ant$_2$ is used for data transmission. The transmitter SU$_1$ uses both Ant$_1$ and Ant$_2$ for simultaneous spectrum sensing and secondary transmission with the help of FD techniques, while the receiver SU$_2$ uses only Ant$_1$ to receive signal from SU$_1$.\footnote{In this model, only the secondary transmitter~(SU$_1$) performs spectrum sensing, while the receiver~(SU$_2$) does not. This transmitter-only sensing mechanism is widely adopted in today's cognitive radio, considered that secondary transmitters and receivers cannot continuously exchange sensing results without interfering the primary network. Besides, we assume that SU$_2$ has two antennas for fairness and generality, since SU$_2$ does not always need to be the receiver.}

The spectrum band occupancy by the PU is modeled as an alternating busy/idle random process where the PU can access the spectrum at any time. We assume that the probabilities of the PU's arrival and departure remain the same across the time, and the holding time of either state is distributed as the exponential distribution\cite{huang2008opportunistic}. We denote the variables of the idle period and busy period of the PU as $t_0$ and $t_1$ , respectively. And let $\tau_0 = \mathbb{E}\left[t_0\right]$ and $\tau_1 = \mathbb{E}\left[t_1\right]$ represent the average idle and busy duration. According to the property of exponential distribution, the probability density functions~(PDFs) of $t_0$ and $t_1$ can be written as, respectively,
\begin{equation}
\begin{split}
&f_0(t_0) = \frac{1}{\tau_0}e^{-\frac{t_0}{\tau_0}},\\
&f_1(t_1) = \frac{1}{\tau_1}e^{-\frac{t_1}{\tau_1}}.
\end{split}
\end{equation}

For SUs, on the other hand, only the idle period of the spectrum band is allowed to be utilized. To detect the spectrum holes and avoid collision with the PU, SU$_1$ needs to sample the spectrum at sampling frequency $f_s$, and make decisions of whether the PU is present after every $N_s$ samples, This makes the secondary traffic time-slotted, with slot length $T = N_s/f_s$.

Considering the common case that $f_s$ can be sufficiently high and the state of PU changes sufficiently slowly, we assume that $\tau_0, \tau_1 \gg T$ and $N_s$ is a sufficiently large integer. If we divide the PU traffic into slots in accordance with SU's sensing process, the probability that PU changes its state in a stochastic slot can be derived as follows.
\begin{itemize}
\item The PU arrives in a stochastic slot:
\begin{equation}\label{arrive}
\mu = \int\limits_{0}^{T} {{f_0}\left( {{t_0}} \right)d{t_0}}  = 1 - {e^{ - \frac{1}{{{m_0}}}}},
\end{equation}
where $m_0 = {\tau_0}/{T}$ and we assume it to be a large integer.
\item The PU leaves in a stochastic slot:
\begin{equation}\label{depart}
\nu = \int\limits_{0}^{T} {{f_1}\left( {{t_1}} \right)d{t_1}}  = 1 - {e^{ - \frac{1}{{{m_1}}}}},
\end{equation}
where $m_1 = {\tau_1}/{T}$ is assumed to be a large integer.
\end{itemize}
Note that when $m_0$ and $m_1$ are sufficiently large, we have $\mu\approx\frac{1}{m_0}$ and $\nu\approx\frac{1}{m_1}$.

\vspace{-0.5em}
\subsection{Simultaneous Sensing and Transmission}

With the help of FD technique, SU$_1$ can detect the PU's presence when it is transmitting signal to SU$_2$. However, as shown by the dotted arrow in the system model in Fig.~1, the challenge of using FD technique is that the transmit signal at Ant$_2$ is received by Ant$_1$, which causes self-interference at Ant$_1$. Note that for Ant$_1$, the received signal is affected by the state of the transmit antenna (Ant$_2$): when Ant$_2$ is silent, the received signal at Ant$_1$ is free of self-interference, and the spectrum sensing is the same as the conventional half-duplex sensing. Thus, we consider the circumstances when SU$_1$ is transmitting or silent
separately.

When SU$_1$ is silent, the received signal at Ant$_1$ is the combination of potential PU's signal and noise. The cases when the PU is busy or idle are referred to as hypothesis $\mathcal{H}_{01}$ and $\mathcal{H}_{00}$, respectively. The received signal at Ant$_1$ under each hypothesis can be written as
\begin{equation}\label{H0}
y = \left\{
\begin{aligned}
& h_ss_p + u,& {\mathcal{H}_{01}},\\
& u,& {\mathcal{H}}_{00},\\
\end{aligned}
\right.
\end{equation}
where $s_p$ denotes the signal of the PU, $h_s$ is the channel from the PU to Ant$_1$ of SU$_1$, and $u\sim \mathcal{CN}\left(0,\sigma_u^2\right)$ denotes the complex-valued Gaussian noise. We assume that $s_p$ is PSK modulated with variance $\sigma_p^2$, and $h_s$ is a Rayleigh channel with zero mean and variance $\sigma_h^2$.

When SU$_1$ is transmitting to SU$_2$, RSI is introduced to the received signal at Ant$_1$. The received signal can be written as
\begin{equation}\label{H1}
y = \left\{
\begin{aligned}
& h_ss_p + w + u,& {\mathcal{H}_{11}},\\
& w + u,& {\mathcal{H}}_{10},\\
\end{aligned}
\right.
\end{equation}
where $\mathcal{H}_{11}$ and $\mathcal{H}_{10}$ are the hypothesises under which the SU is transmitting and the PU is busy or idle, respectively. $w$ in \eqref{H1} denotes the RSI at Ant$_1$, which can be modeled as the Rayleigh distribution with zero mean and variance $\chi^2\sigma_s^2$\cite{jain2011practical,everett2013passive}, where $\sigma_s^2$ denotes the secondary transmit power at Ant$_2$ and $\chi^2 := \frac{\text{Power of the RSI}}{\text{Transmit power}}$ represents the degree of self-interference suppression, which is commonly expressed in dBs, and indicates how well can the self-interference be suppressed.

Spectrum sensing refers to the hypothesis test in either \eqref{H0} or \eqref{H1}. Given that SU$_1$ has the information of its own state (silent or transmitting), it can automatically choose one pair of the hypothesises to test $\left(\{\mathcal{H}_{00}, \mathcal{H}_{01}\}~\text{or}~ \{\mathcal{H}_{10}, \mathcal{H}_{11}\}\right)$ and decide whether the PU is present or not.

\section{Listen-and-Talk~(LAT) Protocol and Key Parameter Design}

In this section, we first present the proposed LAT protocol, and then discuss the key parameter design in spectrum sensing to meet the constraint of collision ratio to the primary network.

\vspace{-0.5em}
\subsection{The LAT Protocol}
The right part of Fig.~\ref{model} shows the LAT protocol, in which SU$_1$ performs sensing and transmission simultaneously by using the FD technique: Ant$_1$ senses the spectrum continuously while Ant$_2$ transmits data when a spectrum hole is detected. Specifically, SU$_1$ keeps sensing the spectrum with Ant$_1$ with sampling frequency $f_s$, which is shown in the line with down arrows. At the end of each slot with duration $T$, SU$_1$ combines all samples in the slot and makes the decision of the PU's presence. The decisions are represented by the small circles, in which the higher ones denote that the PU is judged active, while the lower ones denote otherwise. The activity of SU$_1$ is instructed by the sensing decisions, i.e., SU$_1$ can access the spectrum in the following slot when the PU is judged absent, and it needs to backoff otherwise.

Since the PU can change its state freely, there exist the following four states of spectrum utilization:
\begin{itemize}
    \item State$_1$: the spectrum is occupied only by the PU, and SU$_1$ is silent.
    \item State$_2$: the PU is absent, and SU$_1$ utilizes the spectrum.
    \item State$_3$: the PU and SU$_1$ both transmit, and a collision happens.
    \item State$_4$: neither the PU nor SU$_1$ is active, and there remains a spectrum hole.
\end{itemize}

Among these four states, State$_1$ and State$_2$ are the normal cases, and State$_3$ and State$_4$ are referred to as collision and spectrum waste, respectively. There are two reasons leading to State$_3$ and State$_4$: (A) the PU changes its state during a slot, and (B) sensing error, i.e., false alarm and miss detection.

\vspace{-0.5em}
\subsection{Energy Detection}
We adopt energy detection as the sensing scheme, in which the average received power in a slot is used as the test statistics $M$:
\begin{equation}\label{test}
M = \frac{1}{{{N_s}}}\sum\limits_{n = 1}^{{N_s}} {{{\left| {y\left( n \right)} \right|}^2}},
\end{equation}
where $y\left(n\right)$ denotes the $n$th sample in a slot, and the expression for $y\left(n\right)$ is given in \eqref{H0} and \eqref{H1}.

With a chosen threshold $\epsilon$, the spectrum is judged occupied when $M \geq \epsilon$, otherwise the spectrum is idle. Generally, the probabilities of false alarm and miss detection can be defined as,
\begin{equation}
\begin{split}
&{P_f}\left( \epsilon  \right) = \Pr \left( {M > \epsilon |{\mathcal{H}_{0}}} \right),\\
&{P_m}\left( \epsilon  \right) = \Pr \left( {M < \epsilon |{\mathcal{H}_{1}}} \right),
\end{split}
\end{equation}
where $\mathcal{H}_0$ and $\mathcal{H}_1$ are the hypothesises when the PU is idle and busy, respectively.

Considering the difference of the received signal caused by RSI, we can achieve better sensing performance by changing the threshold according to SU$_1$'s activity. Let $\epsilon_0$ and $\epsilon_1$ be the thresholds when SU$_1$ is silent and busy, respectively, and we can have two sets of probabilities of false alarm and miss detection accordingly, denoted as $\{P_f^0\left(\epsilon_0\right), P_m^0\left(\epsilon_0\right)\}$ and $\{P_f^1\left(\epsilon_1\right), P_m^1\left(\epsilon_1\right)\}$, respectively.

\vspace{-0.5em}
\subsection{Key Parameter Design}
The most important constraint of the secondary networks is that their interference to the primary network must be under a certain level. In this article, we consider this constraint as the collision ratio between SUs and the PU, defined as
\begin{equation}
P_c = \lim_{t\rightarrow\infty}\frac{\text{Collision duration}}{\text{PU's transmission time during $[0,t]$}}.\nonumber
\end{equation}
The sensing parameters are designed according to the constraint of $P_c$. In the rest of this subsection, sensing performance is evaluated, based on which we provide the analytical design of the sensing thresholds.

\textbf{Sensing Error Probabilities: }With the statistical information of the received signal in \eqref{H0} and \eqref{H1}, the statistical properties of $M$ under each hypothesis can be derived. We consider the following two types of time slots:
\begin{itemize}
\item \emph{Slots in which the PU changes its state:} if the PU arrives in a certain slot, the received signal power is likely to increase in the latter fraction of the slot, and the average signal power~($M$) is likely to be higher than the previous slots when the PU is absent. Then the probability of correct detection is higher than $P_f^i$, with $i$ denoting the current activity of SU$_1$. Similarly, if the PU leaves in a slot, the probability of correct detection is higher than $P_m^i$. Note that these slots are rare in the whole traffic, we only consider the lower limits of correct detection in these slots, i.e., we set without further derivation the probabilities of correct detection to be $P_f^i$ and $P_m^i$ when the PU arrives or leaves, respectively.
\item \emph{Slots in which the PU remains either present or absent:} in these slots, the received signal $y\left(n\right)$ in the same slot is i.i.d., and as we assumed in Section II-A, the number of samples $N_s$ is sufficiently large. According to central limit theorem (CLT), the PDF of $M$ can be approximated by a Gaussian distribution $M\sim\mathcal{N}(\mathbb{E}\left[\left|y\right|^2\right],\frac{1}{N_s}\rm var\left[\left|y\right|^2\right])$. The specific statistical properties and the description under each hypothesis are given in Table~\ref{FDGaussian}, in which $\gamma_s = \frac{\sigma_p\sigma_h^2}{\sigma_u^2}$ denotes the signal-to-noise ratio~(SNR) in sensing, and $\gamma_i = \frac{\chi^2\sigma_s^2}{\sigma _u^2}$ is the interference-to-noise ratio~(INR). Detailed derivation of the distribution properties are provided in Appendix A.

\begin{table}[!t]
\renewcommand{\arraystretch}{1.3}
\caption{Properties of PDFs of LAT protocol} \label{FDGaussian}
\centering
\begin{tabular}{|c|c|c|c|c|c|}
\hline
Hypothesis & PU & SU & $\mathbb{E} \left[ M \right]$ & ${\mathop{\rm var}} \left[ M \right]$\\ \hline
$\mathcal{H}_{00}$ & idle & silent & $\sigma_u^2$ & $\frac{\sigma _u^4}{N_s}$\\ \hline
$\mathcal{H}_{01}$ & busy & silent & $\left( {1+\gamma _s} \right)\sigma _u^2$ & $\frac{\left(1+\gamma_s\right)^2\sigma_u^4}{N_s}$ \\ \hline
$\mathcal{H}_{10}$ & idle & active & $\left(1+\gamma_i\right)\sigma_u^2$ & $\frac{\left(1+\gamma_i\right)^2\sigma_u^4}{N_s}$ \\ \hline
$\mathcal{H}_{11}$ & busy & active & $\left(1+\gamma_s + \gamma_i\right)\sigma_u^2$ & $\frac{\left(1+\gamma_s+\gamma_i\right)^2\sigma_u^4}{N_s}$ \\ \hline
\end{tabular}
\end{table}
\end{itemize}

Based on Table \ref{FDGaussian}, the sensing error probabilities can be derived.
\begin{itemize}
\item When SU$_1$ is silent and the test threshold is $\epsilon_0$, the probability of miss detection~($P_m^0$) and the probability of false alarm~($P_f^0$) can be written as
    \begin{equation}\small\label{PmF0}
    P_m^0\left( \epsilon_0  \right) = 1 - \mathcal{Q}\left( {\left( {\frac{{{\epsilon _0}}}{{\left( {1 + {\gamma _s}} \right)\sigma _u^2}} - 1} \right)\sqrt {N_s} } \right),
    \end{equation}
    and
    \begin{equation}\small\label{PfF0}
    P_f^0\left( \epsilon_0  \right) = \mathcal{Q}\left( {\left( {\frac{\epsilon_0 }{{\sigma _u^2}} - 1} \right)\sqrt {N_s} } \right),
    \end{equation}
    respectively, where $\mathcal{Q}(\cdot)$ is the complementary distribution function of the standard Gaussian distribution.
\item Similarly, when SU$_1$ is transmitting with the threshold $\epsilon_1$, the miss detection probability ($P_m^1$) and the false alarm probability ($P_f^1$) are, respectively,
    \begin{equation}\small\label{PmF1}
    P_m^1\left( \epsilon_1  \right) = 1 - \mathcal{Q}\left( {\left( {\frac{\epsilon_1 }{{\left( {1 + {\gamma _s} + {\gamma _i}} \right)\sigma _u^2}} - 1} \right)\sqrt {N_s} } \right),
    \end{equation}
    and
    \begin{equation}\small\label{PfF1}
    P_f^1\left( \epsilon_1  \right) = \mathcal{Q}\left( {\left( {\frac{\epsilon_1 }{{\left( {1 + {\gamma _i}} \right)\sigma _u^2}} - 1} \right)\sqrt {N_s} } \right).
    \end{equation}
\end{itemize}

\textbf{State Transition and Overall Collision Probability:}
Different from the conventional LBT protocol in HD CRNs where each slot is independent, in the LAT protocol, the selection of sensing threshold depends on SUs' activity, which is instructed by the sensing result in the previous slot. Thus, the state of the system in each slot is no longer independent, and the collision ratio is not only related to sensing error probabilities in each slot, but also the state in the previous slots. Thus, joint consideration of the transition among all kinds of slots is necessary. Since the sensing error probabilities in the slots where the PU changes its presence can be approximated by that in the other slots, in this part, we model the state transition of the system as a discrete-time Markov chain (DTMC), in which the system can be viewed as totally time-slotted with $T$ as the slot length. Fig.~\ref{transition} shows the state transition diagram, where we denote State$_i$ as $S_{i\mod 4}$~($i = 1,2,3,4$) for simplicity.

\begin{figure}[!t]
\centering
\vspace{-1em}
\includegraphics[width=3.6in]{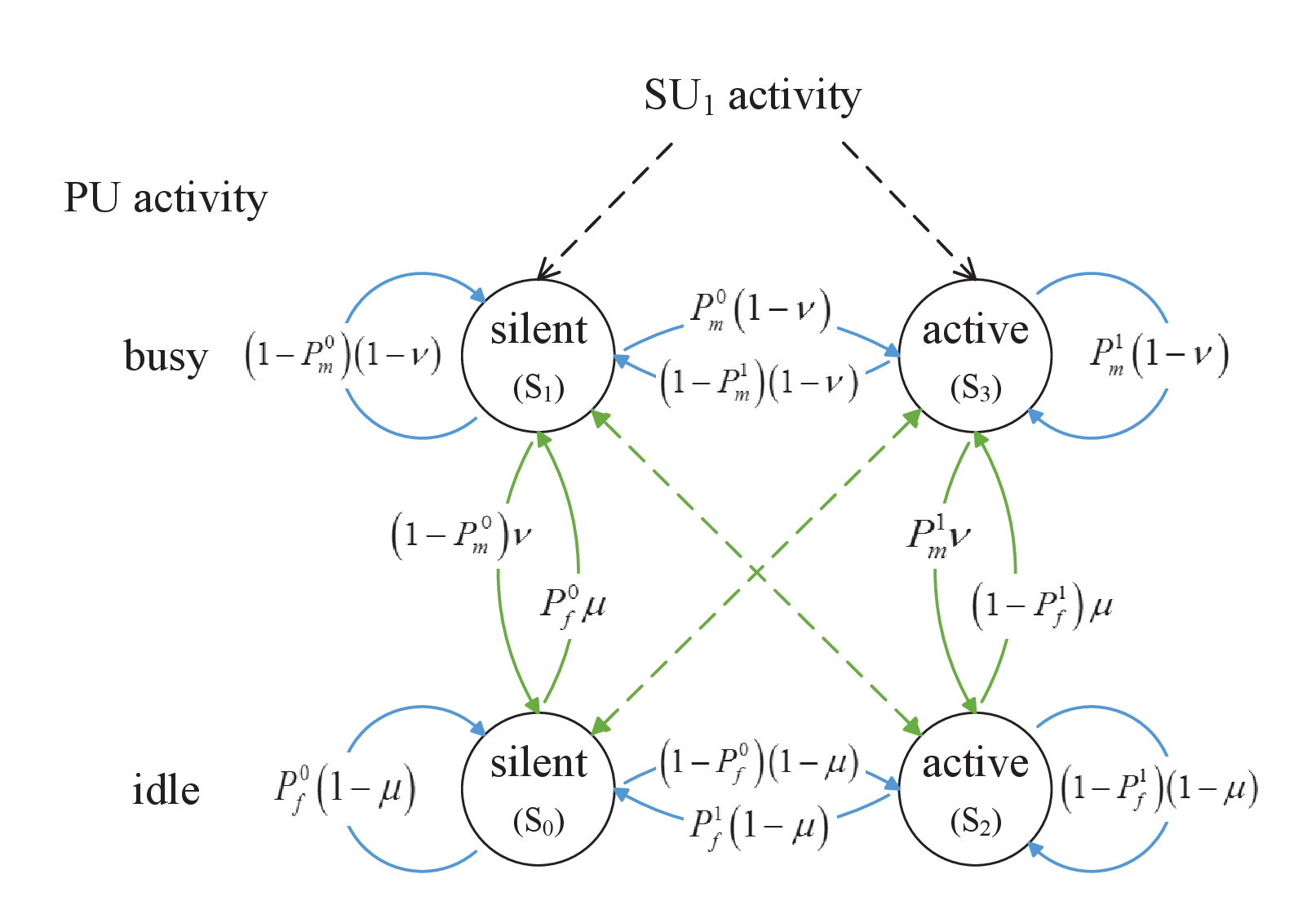}
\vspace{-1.5em}
\caption{State Transition of the System} \label{transition}
\vspace{-1em}
\end{figure}

\begin{Prop}
The probability that the system stays in the collision state $S_3$ is
\begin{equation}
P_3 = \frac{1}{{r + 1}} \cdot \frac{{P_m^0\left( {1 - \xi \Delta } \right) + \left( {1 - P_f^0} \right)r}}{{\left( {1 - \xi \Delta } \right)\varsigma  + \xi r}},
\end{equation}
where $\xi = 1 - P_f^0 + P_f^1$, $\zeta = 1 + P_m^0-P_m^1$, $r=\nu/\mu$, and $\Delta=1+r-1/\mu$.
\end{Prop}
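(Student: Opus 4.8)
\emph{Proof plan.} The plan is to write down the one-step transition matrix of the four-state chain explicitly, exploit the fact that the primary process evolves autonomously to fix $P_1+P_3$, and then reduce the stationary equations to a $2\times 2$ linear system whose solution collapses to the claimed closed form once the shorthands $\xi$, $\varsigma$, $\Delta$ are introduced. Throughout I identify the states as $(\mathrm{PU},\mathrm{SU})$ pairs, so that $S_1=(\text{busy},\text{silent})$, $S_2=(\text{idle},\text{active})$, $S_3=(\text{busy},\text{active})$ (collision), and $S_0=(\text{idle},\text{silent})$ (waste).

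First I would build the transition matrix. In a given slot the SU's activity is dictated by the sensing decision of the \emph{previous} slot, whereas the PU's occupancy evolves on its own according to $\mu$ (arrival) and $\nu$ (departure). Hence the one-step transition probability from $(\mathrm{PU}_k,\mathrm{SU}_k)$ to $(\mathrm{PU}_{k+1},\mathrm{SU}_{k+1})$ factorizes as $\Pr(\mathrm{PU}_{k+1}\mid \mathrm{PU}_k)\cdot\Pr(\mathrm{SU}_{k+1}\mid \mathrm{PU}_k,\mathrm{SU}_k)$, where the second factor is read off the error probabilities: a slot ending in a ``PU-idle'' decision makes the SU active next, and this decision has probability $1-P_f^0$, $P_m^0$, $1-P_f^1$, $P_m^1$ in states $S_0,S_1,S_2,S_3$ respectively, the superscript selecting the threshold that matches the current SU activity. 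Writing $\pi\mathbf{P}=\pi$ then yields four balance equations.

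Next I would peel off the primary marginal. Adding the balance equations for the two ``PU-busy'' states $S_1$ and $S_3$, the sensing terms cancel in pairs because the SU factor of each source row sums to one, leaving $(P_1+P_3)\nu=\mu(P_0+P_2)$, i.e. $P_1+P_3=\mu/(\mu+\nu)=1/(r+1)$ and $P_0+P_2=r/(r+1)$. This already produces the leading factor $1/(r+1)$, so it remains only to compute the conditional collision probability $q:=P_3/(P_1+P_3)$. I would therefore reparametrize $P_1,P_3$ through $q$ and $P_0,P_2$ through $p:=P_2/(P_0+P_2)$, substitute into the balance equations for $S_2$ and $S_3$, and obtain a $2\times 2$ linear system in $(p,q)$. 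At this point I introduce $\xi=1-P_f^0+P_f^1$ and $\varsigma=1+P_m^0-P_m^1$, which compress the coefficients arising from the two competing thresholds.

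Finally I would solve the system by Cramer's rule. The decisive simplifications are that the coefficients multiplying $1-P_f^0$ collapse through identities such as $\xi+(1-\xi)=1$ and $\mu+(1-\mu)=1$, so the numerator reduces to $r\mu(1-P_f^0)+P_m^0\bigl[(1-\nu)\xi+\mu(1-\xi)\bigr]$ and the determinant to $r\bigl[\varsigma\xi+\mu\varsigma(1-\xi)+\nu\xi(1-\varsigma)\bigr]$. Dividing both by $\mu$, substituting $\nu=r\mu$, and recognizing $1-\xi(1+r-1/\mu)=1-\xi\Delta$ turns these into $P_m^0(1-\xi\Delta)+(1-P_f^0)r$ and $(1-\xi\Delta)\varsigma+\xi r$, whence $P_3=\frac{1}{r+1}q$ takes the stated form. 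The main obstacle is bookkeeping: setting up the transition factors correctly (which threshold and which hypothesis govern each state, and the one-slot delay between sensing and access) and then organizing the algebra so that the cancellations producing $\xi,\varsigma,\Delta$ actually surface rather than being buried under the many cross terms.
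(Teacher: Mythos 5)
Your plan is correct and follows essentially the same route as the paper: both set up the identical four-state DTMC with the transition probabilities factored as (PU evolution governed by $\mu,\nu$) times (SU decision governed by $P_f^0,P_m^0,P_f^1,P_m^1$ with a one-slot delay), and both extract $P_3$ from the stationary equations plus normalization. The one organizational difference is that the paper solves the full $4\times 4$ system directly and only uses the PU marginal $P_1+P_3=\mu/(\mu+\nu)$ as an a posteriori consistency check, whereas you derive that marginal first (correctly, since the SU factor in each source column sums to one) and use it to reduce the problem to a $2\times 2$ system in the conditional probabilities; this is a cleaner path to the same algebra and makes the leading factor $1/(r+1)$ appear for free. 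One bookkeeping inconsistency to fix before writing this up: your stated Cramer numerator $r\mu(1-P_f^0)+P_m^0\left[(1-\nu)\xi+\mu(1-\xi)\right]$ and determinant $r\left[\varsigma\xi+\mu\varsigma(1-\xi)+\nu\xi(1-\varsigma)\right]$ are scaled inconsistently by a factor of $r$ (depending on whether you normalize the $S_2$ balance equation by $P_0+P_2$ or by $P_1+P_3$, either both or neither should carry the extra $r$); as written their ratio gives $q$ off by $1/r$, although the final closed form you quote is the correct one, so this is a slip in the intermediate expressions rather than in the method.
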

\begin{proof}
The probability for the system staying in each state $P_k~(k=0,1,2,3)$ can be calculated considering the steady-state distribution of the Markov chain:
\begin{equation}
{\bf{\Psi p = p}},
\end{equation}
where ${\bf{p}} = \left[P_0,P_1,P_2,P_3\right]^{\bf{T}}$ is the vector of steady probabilities, and $\bf{\Psi}$ is the state transition matrix abstracted from Fig.~\ref{transition}, which is given at the top of next page.
\begin{figure*}[ht]
\begin{equation}
{\bf{\Psi}} =
\left[ {\begin{array}{*{20}{c}}
{P_f^0\left( {1 - \mu } \right)}&{P_f^0\mu \left( {1 - P_m^0} \right)\nu }&{P_f^1\left( {1 - \mu } \right)}&{\left( {1 - P_m^1} \right)\nu }\\
{P_f^0\mu }&{\left( {1 - P_m^0} \right)\left( {1 - \nu } \right)}&{P_f^1\mu }&{\left( {1 - P_m^1} \right)\left( {1 - \nu } \right)}\\
{\left( {1 - P_f^0} \right)\left( {1 - \mu } \right)}&{P_m^0\nu }&{\left( {1 - P_f^1} \right)\left( {1 - \mu } \right)}&{P_m^1\nu }\\
{\left( {1 - P_f^0} \right)\mu }&{P_m^0\left( {1 - \nu } \right)}&{\left( {1 - P_f^1} \right)\mu }&{P_m^1\left( {1 - \nu } \right)}
\end{array}} \right]\nonumber
\end{equation}
\hrulefill
\end{figure*}

Combining the constraint that $\sum\limits_{k = 0}^3 {{P_k} = 1}$, we have
\begin{equation}\small\label{Poriginal}
{\mathbf{p}} = \frac{1/(r + 1)}{{\left( {1 - \xi \Delta } \right)\varsigma  + \xi r}} \left[ {\begin{array}{*{20}{c}}
{r  \left( {P_f^1\left( {r - \varsigma \Delta } \right) + 1 - P_m^1} \right)}\\
{\left( {1 - P_m^1} \right)\left( {1 - \xi \Delta } \right) + P_f^1r}\\
{r \left( {\left( {1 - P_f^0} \right)\left( {r - \varsigma \Delta } \right) + P_m^0} \right)}\\
{P_m^0\left( {1 - \xi \Delta } \right) + \left( {1 - P_f^0} \right)r}
\end{array}} \right].
\end{equation}

To have a check on the result, we consider the probability that the PU is busy and idle as $P_{busy} = P_1+P_3 = \frac{\mu}{\mu+\nu} \approx \frac{m_1}{m_0+m_1}$ and $P_{idle} = P_0 + P_2 = \frac{\nu}{\mu+\nu}\approx \frac{m_0}{m_0+m_1}$, which are consistent with the results when we consider the PU's traffic only.
\end{proof}

\textbf{Collision Ratio:}
The collision of the PU and SU$_1$ occurs in the following two kinds of circumstances: (A) When the PU keeps occupying the spectrum and SU$_1$ fails to detect the presence of PU's signal in the previous slot, which is depicted in Fig.~\ref{transition} as $S_3$ with the probability of $P_3$. The collision length is $T$. (B) The certain slots in which PU arrives. SU$_1$ is very likely to be transmitting in these slots since the PU is likely to be absent in the previous ones. The occurrence probability of this circumstance is equal to the PU's arrival rate $\frac{\mu\nu}{\mu+\nu}$.

\begin{Prop}
The average collision length under circumstance (B), where the PU changes state, can be approximated by $\frac{T}{2}$, when $m_0$ is large enough.
\end{Prop}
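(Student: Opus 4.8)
The plan is to identify the collision length in circumstance~(B) with the residual time between the PU's arrival instant and the end of the current slot, and then to show that this residual, averaged over the arrival instant, tends to $T/2$ as $m_0\to\infty$. First I would set up coordinates inside the slot: consider a slot $[0,T]$ at whose start the PU is idle (so SU$_1$ is transmitting) and during which the PU arrives, and let $X\in[0,T]$ denote the arrival instant measured from the slot boundary. Since SU$_1$ cannot react until the next sensing decision, the collision persists from $X$ to the end of the slot, so its length is exactly $T-X$. The quantity to evaluate is therefore $\mathbb{E}\left[T-X\right]$, where the expectation is taken over the conditional law of $X$ given that the arrival falls in this slot.

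Next I would pin down that conditional law. Because the idle period $t_0$ is exponentially distributed and the PU is idle at the slot boundary, the memoryless property gives that the remaining idle time from the boundary is again exponential with rate $\lambda=1/\tau_0$. Conditioning on the event $\{X\le T\}$ (arrival within this slot) yields the truncated-exponential density $\frac{\lambda e^{-\lambda x}}{1-e^{-\lambda T}}$ on $[0,T]$, which is consistent with $\mu=1-e^{-1/m_0}$ in \eqref{arrive}.

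Then I would compute $\mathbb{E}\left[T-X\mid X\le T\right]=T-\mathbb{E}\left[X\mid X\le T\right]$ using the standard integral $\int_0^T x\lambda e^{-\lambda x}\,dx$, and substitute $\lambda T=1/m_0=:\alpha$ to obtain a closed form of the shape $T\bigl(1-1/\alpha+1/(e^{\alpha}-1)\bigr)$. A Taylor expansion $1/(e^{\alpha}-1)=1/\alpha-1/2+\alpha/12+O(\alpha^2)$ cancels the $1/\alpha$ terms and leaves $T\bigl(1/2+\alpha/12+\cdots\bigr)$, which tends to $T/2$ as $\alpha=1/m_0\to0$. Equivalently, and more transparently for the reader, when $m_0$ is large the density $\lambda e^{-\lambda x}/(1-e^{-\lambda T})$ is nearly flat over the short interval $[0,T]$ and converges to the uniform density $1/T$, whose mean residual $T-\mathbb{E}[X]$ is manifestly $T/2$.

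The main obstacle is not the calculus but the modeling justification: one must argue cleanly that, conditioned on arrival in the slot, $X$ obeys the truncated-exponential (equivalently, asymptotically uniform) law, which rests on memorylessness together with the standing assumption $\tau_0\gg T$ from Section~II. Once that is granted the limit $T/2$ follows immediately, and the neglected term $O(1/m_0)$ makes precise the sense in which $\frac{T}{2}$ is an approximation valid for large $m_0$.
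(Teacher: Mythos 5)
Your proposal is correct and follows essentially the same route as the paper: the paper's proof is precisely the conditional expectation $\bar T_2=\int_0^T (T-t_0)f_0(t_0)\,dt_0\big/\int_0^T f_0(t_0)\,dt_0$ over the truncated exponential, which is your $\mathbb{E}[T-X\mid X\le T]$. Your Taylor expansion of $1/(e^{\alpha}-1)$ is a cleaner and more rigorous way to take the large-$m_0$ limit than the paper's chain of approximations, and your memorylessness remark supplies the modeling justification the paper leaves implicit, but the underlying argument is the same.
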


\begin{proof}
and the average collision length in this case can be calculated as
\begin{equation}\small\label{col2}
\begin{split}
&\bar{T_2} = \frac{\int\limits_{0}^{T} {\left(T-t_0\right){f_0}\left( {{t_0}} \right)d{t_0}}}{\int\limits_{0}^{T} {{f_0}\left( {{t_0}} \right)d{t_0}}} = T\left(1 - m_0 - \frac{e^{-\frac{1}{m_0}}}{1-e^{-\frac{1}{m_0}}}\right) \\
 &\approx T\left(\frac{1-e^{-\frac{1}{m_0}}} {1-e^{-\frac{1}{m_0}}+\frac{1}{m_0}e^{-\frac{1}{m_0}}}\right) \approx T \left(\frac{e^{-\frac{1}{m_0}}} {e^{-\frac{1}{m_0}}+e^{-\frac{1}{m_0}}}\right) = \frac{T}{2},
\end{split}
\end{equation}
where the approximation is valid when $m_0$ is sufficiently large.
\end{proof}

It is unavoidable in the LAT protocol that when the PU arrives, a short head of the signal, with the length of a SU's slot approximately, collides with the SU's signal. Combine the two circumstances, and the overall collision rate can be given by
\begin{equation}\small\label{collision}
P_c = {\left(P_3 + \frac{1}{2}\frac{\mu\nu}{\mu+\nu}\right)}/{P_{busy}} = \frac{\nu}{2} + \frac{{P_m^0\left( {1 - \xi\Delta } \right) + \left( {1 - P_f^0} \right)r}}{{ {\left( {1 - \xi\Delta } \right)\zeta + \xi r} }},
\end{equation}

\textbf{Design of Sensing Thresholds:}
For the parameter design, we have a maximum allowable $P_c$ as the system constraint, and all the parameters of the sensing process should be adjusted according to $P_c$. Note that $\Delta$ and $r$ are only related to the PU's traffic, and $\{P_m^0,P_f^0\}$ and $\{P_m^1,P_f^1\}$ are closely related via thresholds $\epsilon_0$ and $\epsilon_1$, respectively. Thus, we actually have two independent variables of the secondary network to design to meet the constraint of $P_c$.

We choose $P_m^0$ and $P_m^1$ as the independent variables. With \eqref{collision} as the only constraint, there are infinite choices of ($P_m^0,P_m^1$) pair. For simplicity, we set $P_m^0 = P_m^1 = P_m$, i.e., $\zeta = 1$ to reduce the degree of freedom, and the constraint can be simplified as
\begin{equation}\small\label{Pc}
P_c = \frac{\nu}{2} + \frac{{P_m\left( {1 - \xi\Delta } \right) + \left( {1 - P_f^0} \right)r}}{1 + \left(\frac{1}{\mu} - 1\right)\xi},
\end{equation}
where $\Delta$, $r$, $\mu$, and $\nu$ are relevant only to the PU traffic, and $\xi = 1 - P_f^0 + P_f^1$ can be derived from $P_m$ via test thresholds $\epsilon_0$ and $\epsilon_1$.

In the rest of this part, we calculate $P_m$ from the constraint of $P_c$, from which the sensing thresholds $\epsilon_0$ and $\epsilon_1$ can be obtained.

Combining \eqref{PmF0} and \eqref{PfF0}, \eqref{PmF1} and \eqref{PfF1}, we can obtain $P_f^0$ and $P_f^1$ as functions of $P_m$ as, respectively,
\begin{equation}\small\label{PfF0final}
P_f^0\left( {{P_m}} \right) = \mathcal{Q}\left( {-{\mathcal{Q}^{ - 1}}\left( {P_m} \right)\left( {1 + {\gamma _s}} \right) + {\gamma _s}\sqrt {N_s} } \right);
\end{equation}
\begin{equation}\small\label{PfF1final}
{P_f^1}\left( P_m \right) = \mathcal{Q}\left( {-\mathcal{Q}^{ - 1}\left( {P_m} \right)\left( {1 + \frac{{{\gamma _s}}}{{1 + {\gamma _i}}}} \right) + \frac{\gamma _s\sqrt {N_s} }{{1 + {\gamma _i}}} } \right).
\end{equation}
From \eqref{PfF0final} and \eqref{PfF1final}, we can find a rise of the false alarm probability when the RSI exists. This result indicates that when interference increases, the sensing performance gets worse.

With \eqref{PfF0final} and \eqref{PfF1final}, $\xi$ can be expressed as $\xi\left(P_m\right) = 1 - P_f^0\left(P_m\right) + P_f^1\left(P_m\right)$. With given parameters of the PU's traffic and the slot length, $P_m$ can be solved from \eqref{Pc}. Since the analytical expression of $P_m$ is complicated, we only give some typical numerical solution in Fig.~\ref{numPm}, where the sensing SNR $\gamma_s = -10$dB, INR $\gamma_i = 5$dB, number of samples $N_s = 200$, and $r$ is set to be 6 to meet the real case that the typical spectrum occupancy is less than $15\%$~\cite{FCC2002report}.

\begin{figure}[!t]
\centering
\includegraphics[width=3.6in]{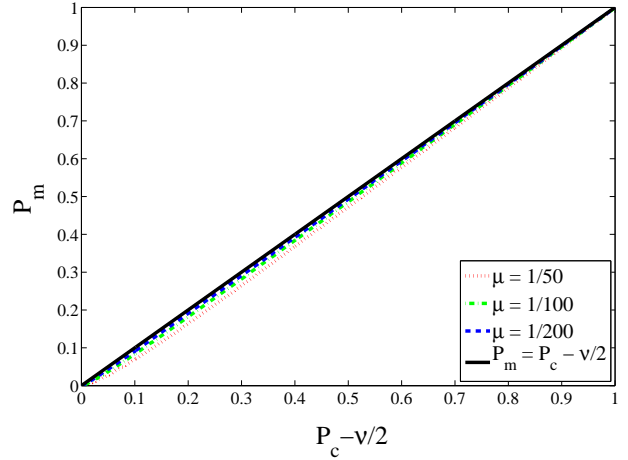}
\vspace{-1.5em}
\caption{Numerical Solution of $P_m$; $\gamma_s$ = -10dB, $\gamma_i$ = 5dB, $N_s = 200$, and $r = 5$} \label{numPm}
\vspace{-1em}
\end{figure}

It is shown in Fig.~\ref{numPm} that when $\mu$ goes down, $P_c - \frac{\nu}{2}$ becomes a fine approximation of $P_m$. With the large-$m_0$ assumption, we regard $\mu$ as sufficiently small, and $P_m$ is determined by
\begin{equation}
P_m = P_c - \frac{\nu}{2} = P_c - \frac{1}{2}\left(1-e^{-\frac{T}{\tau_1}}\right).
\end{equation}
This indicates that with the same constraint $P_c$ and parameters of the PU's traffic, the required $P_m$ gets squeezed when SU's slot length $T$ increases.

With $P_m = P_c - \nu / 2$, the thresholds $\epsilon_0$ and $\epsilon_1$ can be obtained from \eqref{PmF0} and \eqref{PmF1}, respectively:
\begin{equation}\small
\epsilon_0 = \left.\left( {\frac{\mathcal{Q}^{ - 1}\left(1 - P_m \right)}{\sqrt {N_s} } + 1} \right)\left( {1 + {\gamma _s}} \right)\sigma _u^2\right|_{P_m = P_c - \nu / 2};\label{thhold0}
\end{equation}
\begin{equation}\small
\epsilon_1 = \left.\left( {\frac{{{\mathcal{Q}^{ - 1}}\left( {{1 - P_m}} \right)}}{{\sqrt {N_s} }} + 1} \right)\left( {1 + {\gamma _s} + {\gamma _i}} \right)\sigma _u^2\right|_{P_m = P_c - \nu / 2}.\label{thhold1}
\end{equation}
A lift of sensing threshold due to the RSI~($\gamma_i$) can be found from \eqref{thhold0} and \eqref{thhold1}, which is in accordance with the previous analysis.

\section{Performance Analysis of the LAT}

In this section, we first evaluate the sensing performance of the LAT by the probabilities of spectrum waste ratio under the constraint of collision ratio. Then, with the closed-form analytical secondary throughput, a tradeoff between the secondary transmit power and throughput is elaborated theoretically.

\vspace{-0.5em}
\subsection{Spectrum Utilization Efficiency and Secondary Throughput}

\textbf{Spectrum Waste Ratio:}
Similar to the analysis of the collision ratio, we combine the following two kinds of time slots to derive the spectrum waste ratio: (A) the slots when the spectrum remains idle; and (B) the slots of the PU's departure. There exists waste of spectrum holes in (A) when the system is in the state $S_0$ in Fig.~\ref{transition}, and the probability is given by $P_0$ in \eqref{Poriginal}. Every time when the SU fails to find the hole, the waste length is $T$. In (B), the average waste length can be derived from the PU's traffic with the similar method in \eqref{col2}, and it also yields $\frac{T}{2}$ of the average waste length. The probability of the PU's departure is $\frac{\mu\nu}{\mu+\nu}$, which is same as its arrival rate. The ratio of wasted spectrum hole is then given by
\begin{equation}\small\label{Pw}
P_w = {\left(P_0 + \frac{1}{2}\frac{\mu\nu}{\mu+\nu}\right)}/{P_{idle}} = \frac{\mu}{2} + \frac{{\left( \frac{1}{\mu} - 1 \right)P_f^1 + 1 - P_m}}{1 + \left(\frac{1}{\mu} - 1\right)\xi}.
\end{equation}

\textbf{Secondary Throughput:}
SU$_1$'s throughput can be measured with the waste ratio and the transmit rate. The achievable rate under perfect sensing is given as
\begin{equation}
R = {\log _2}\left( 1+\gamma_t \right),
\end{equation}
where $\gamma_t = \frac{\sigma_s^2\sigma_t^2}{\sigma_u^2}$ represents the SNR in transmission, with $\sigma_t^2$ denotes the channel gain of the transmit channel from SU$_1$ to SU$_2$, and SU$_1$'s throughput can be measured as
\begin{equation}\small\label{thputF}
\begin{split}
C &= R\cdot\left(1-P_w\right) \\
&= {\log _2}\left( 1+\gamma_t \right)\left(1 - \frac{\mu}{2} - \frac{{\left( \frac{1}{\mu} - 1 \right)P_f^1 + 1 - P_m}}{1 + \left(\frac{1}{\mu} - 1\right)\xi}\right).
\end{split}
\end{equation}

\vspace{-0.5em}
\subsection{Power-Throughput Tradeoff}
In the expression of SU$_1$'s throughput in \eqref{thputF}, there are two factors: $R$ and $\left(1-P_w\right)$. On one hand, $R$ is positively proportional to SU$_1$'s transmit power $\sigma_s^2$. On the other hand, however, the following proposition holds.

\begin{Prop}
The spectrum waste ratio $P_w$ increases with the secondary transmit power $\sigma_s^2$.
\end{Prop}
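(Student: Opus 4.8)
The plan is to trace how $P_w$ in \eqref{Pw} depends on $\sigma_s^2$ and to reduce the claim to two one-dimensional monotonicity facts joined by the chain rule. Among all quantities appearing in \eqref{Pw}, only the interference-to-noise ratio $\gamma_i = \chi^2\sigma_s^2/\sigma_u^2$ carries any dependence on the transmit power $\sigma_s^2$, and it is strictly increasing in $\sigma_s^2$. The PU-traffic quantities $\mu$ and $\nu$ are fixed, and since the thresholds are designed to hold $P_m = P_c - \nu/2$ constant, $P_m$ is also fixed as $\sigma_s^2$ varies. By \eqref{PfF0final}, $P_f^0$ depends only on $P_m$, $\gamma_s$, and $N_s$, so it too is unaffected by $\gamma_i$. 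Hence the entire $\sigma_s^2$-dependence of $P_w$ flows through the single quantity $P_f^1$, which also enters $\xi = 1 - P_f^0 + P_f^1$. It therefore suffices to prove (i) $P_w$ is increasing in $P_f^1$, and (ii) $P_f^1$ is increasing in $\gamma_i$.

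For step (i), I would treat $P_w$ as a function of the scalar $P_f^1$ with everything else held constant. Writing $a = 1/\mu - 1 > 0$, the second term of \eqref{Pw} is $N/D$ with numerator $N = aP_f^1 + 1 - P_m$ and denominator $D = 1 + a(1 - P_f^0 + P_f^1)$, and both satisfy $dN/dP_f^1 = dD/dP_f^1 = a$. The quotient rule then gives $\partial(N/D)/\partial P_f^1 = a(D - N)/D^2$, and a short simplification yields $D - N = a(1 - P_f^0) + P_m$. Since $a > 0$, $P_f^0 < 1$, and $P_m > 0$, this is strictly positive, so $\partial P_w/\partial P_f^1 > 0$. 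This step is purely algebraic and I expect no difficulty.

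For step (ii), I would rewrite the argument of $\mathcal{Q}$ in \eqref{PfF1final} as an affine function of $u := 1/(1+\gamma_i)$, namely $-\mathcal{Q}^{-1}(P_m) + \gamma_s u\bigl(\sqrt{N_s} - \mathcal{Q}^{-1}(P_m)\bigr)$. Because $\mathcal{Q}(\cdot)$ is strictly decreasing and $u$ is strictly decreasing in $\gamma_i$, the monotonicity of $P_f^1$ in $\gamma_i$ is governed entirely by the sign of the coefficient $\gamma_s(\sqrt{N_s} - \mathcal{Q}^{-1}(P_m))$: the map $\gamma_i \mapsto P_f^1$ is increasing precisely when $\sqrt{N_s} > \mathcal{Q}^{-1}(P_m)$, i.e. $N_s > (\mathcal{Q}^{-1}(P_m))^2$. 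This sign condition is the main obstacle, since a priori the adjusted threshold could have pushed the false-alarm probability either way; it is resolved by invoking the standing large-$N_s$ assumption of Section II (with $P_m \approx P_c$ small, $\mathcal{Q}^{-1}(P_m)$ is only a few units, far below $\sqrt{N_s}$). Intuitively this matches the detection picture in Table~\ref{FDGaussian}: raising $\gamma_i$ inflates both the mean and the standard deviation of $M$ under $\mathcal{H}_{10}$ while leaving the gap between the $\mathcal{H}_{10}$ and $\mathcal{H}_{11}$ means fixed at $\gamma_s\sigma_u^2$, so the detection becomes harder and $P_f^1$ must rise to keep $P_m$ fixed. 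Combining (i) and (ii) through the chain rule, $\partial P_w/\partial\sigma_s^2 = (\partial P_w/\partial P_f^1)(dP_f^1/d\gamma_i)(d\gamma_i/d\sigma_s^2) > 0$, which is the claim.
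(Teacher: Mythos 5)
Your proof is correct and follows essentially the same route as the paper: the paper likewise rewrites the second term of \eqref{Pw} so that $P_f^1$ appears only in the denominator (your quotient-rule computation with $D-N=a\left(1-P_f^0\right)+P_m$ is the same algebra as the paper's rearrangement), and then simply asserts from \eqref{PfF1final} that $P_f^1$ rises with $\gamma_i$. Your step (ii) additionally makes explicit the sign condition $\sqrt{N_s}>\mathcal{Q}^{-1}\left(P_m\right)$ that the paper leaves implicit, which is a worthwhile refinement rather than a departure.
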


\begin{proof}
Firstly, the INR $\gamma_i$ increases with the transmit power and in turn lifts $P_f^1$, which can be seen from \eqref{PfF1final}.

Then, we can rewrite \eqref{Pw} as
\begin{equation}\small
\begin{split}
P_w &= \frac{\mu}{2} + \frac{{\left( \frac{1}{\mu} - 1 \right)P_f^1 + 1 - P_m}}{1 + \left(\frac{1}{\mu} - 1\right)\xi}\\
 &= \frac{\mu}{2} + \frac{{\left( \frac{1}{\mu} - 1 \right)P_f^1 + 1 - P_m}}{1 + \left(\frac{1}{\mu} - 1\right)\left(1 - P_f^0 \right) + \left(\frac{1}{\mu} - 1\right)P_f^1}\\
 &= \frac{\mu}{2} + 1 - \frac{\left(\frac{1}{\mu}-1\right)\left(1-P_f^0\right)+P_m}{{1 + \left(\frac{1}{\mu} - 1\right)\left(1 - P_f^0 \right) + \left(\frac{1}{\mu} - 1\right)P_f^1}}.
\end{split}
\end{equation}
When $P_f^1$ increases, the third term decreases and $P_w$ increases monotonically. Then the increase of SU$_1$'s transmit power results in greater waste of the vacant spectrum.
\end{proof}

Thus, there may exist a power-throughput tradeoff in this protocol: when the secondary transmit power is low, the RSI is negligible, the spectrum is used more fully with small $P_w$, yet the ceiling throughput is limited by $R$; when the transmit power increases, the sensing performance get deteriorated, while at the same time, SU$_1$ can transmit more data in a single slot.

\textbf{Local Optimal Transmit Power:}\footnote{Note that the secondary throughput is not purely convex throughout the domain of transmit power. There may exist local optimal points in low power region, while the throughput is monotonically increasing in the high power region. The point of the discussion of the power-throughput tradeoff and the calculation of the local optimal transmit power is that the secondary throughput does not monotonously increase with the transmit power, which means that SUs may not always transmit with its maximum transmit power to achieve highest throughput, instead, a mediate value may lead to better performance.}
The analysis above indicates the existence of a mediate secondary transmit power to achieve both high spectrum utilization efficiency in time domain and high secondary throughput. To obtain this mediate value of transmit power, we differentiate the expression of the throughput to find the local optimal points of the secondary transmit power $\widehat{\sigma_s^2}$, which satisfies
\begin{equation}
{\left. {\frac{{dC}}{{d\sigma _s^2}}} \right|_{\widehat {\sigma _s^2}}} = 0.
\end{equation}
With detailed derivation presented in Appendix B, we have the local optimal power satisfies
\begin{equation}\small\label{dCintext}
\kappa \ln \left( {{\gamma _t} + 1} \right) \frac{{\exp \left( { - \frac{{{\rho ^2}}}{2}} \right)  \left( {\frac{1}{\mu } - 1} \right) \Xi}}{{\sqrt {2\pi } \left(\gamma_i+1\right)^2\alpha }} + \frac{\sigma _t^2 \left( {\frac{\mu }{2} - \kappa } \right)}{\gamma_t+1}  = 0,
\end{equation}
where the notations are as follow:
\begin{equation}\small
\begin{split}
&\rho = -\mathcal{Q}^{-1}\left(P_m\right)\left(\frac{\gamma_s}{\gamma_i+1}+1\right)+\frac{\gamma_s}{\gamma_i+1}\sqrt{N_s},~ \text{i.e.},~\mathcal{Q}\left(\rho\right) = P_f^1,\\
&\alpha  = \left( {\frac{1}{\mu } - 1} \right) \cdot \left( {\mathcal{Q}\left( \rho  \right) - P_f^0 + 1} \right) + 1,\\
&\kappa  = \frac{{\left( {\frac{1}{\mu } - 1} \right) \cdot \left( {1 - P_f^0} \right) + {P_m}}}{{\left( {\frac{1}{\mu } - 1} \right) \left(\mathcal{Q}\left(\rho\right)-P_f^0+1\right)  + 1}},\\
&\Xi  = {\gamma _s}{\chi ^2}\left( {{\mathcal{Q}^{ - 1}}\left( {1 - {P_m}} \right) + \sqrt {{N_s}} } \right).
\end{split}
\nonumber
\end{equation}
With $\sigma_s^2$ as the only unknown variable, it can be calculated numerically.

To obtain better comprehension about the properties of the local optimal transmit power, we consider the case when $\mu$ is sufficiently small, and \eqref{dCintext} can be simplified as
\begin{equation}\small\label{simpdC=0}
\exp \left( { - \frac{{{\rho ^2}}}{2}} \right)\frac{\left(\gamma_t + 1\right)\ln\left( {{\gamma _t} + 1} \right)}{{{\left( {{\gamma _i} + 1} \right)^2}}} = \frac{{\sqrt {2\pi } \sigma _t^2}}{\Xi }\left( {1 - P_f^0 + \mathcal{Q}\left( \rho  \right)} \right).
\end{equation}

Now we provide the existence conditions of the local optimal transmit power. The left side of \eqref{simpdC=0} is a convex curve of $\sigma_s^2$ with a single maximum. When $\sigma_s^2$ goes to zero or infinity, the value of the left side goes to zero. The value of the right side changes from $\frac{{\sqrt {2\pi } \sigma _t^2}}{\Xi }$ to $\frac{{\sqrt {2\pi } \sigma _t^2}}{\Xi }\left( {2 - P_f^0 - P_m} \right)$. We can roughly say that when the maximum of the left side is larger than either $\frac{{\sqrt {2\pi } \sigma _t^2}}{\Xi }$ or $\frac{{\sqrt {2\pi } \sigma _t^2}}{\Xi }\left( {2 - P_f^0 - P_m} \right)$, there would be two solutions to \eqref{simpdC=0}. When the maximum of the left side is smaller than the minimum of the right side, on the other hand, no solution exists.

\textbf{Characteristics of the Power-throughput Curve:} Given the discussions above, there exist two cases of the power-throughput curve regarding the existence of the local optimal power. We analyze these two cases separately in the following.

\begin{enumerate}
\item When equation \eqref{simpdC=0} has no solutions, the curves of transmit power on the left and right sides never meet. Since the right side of \eqref{simpdC=0} is always far above zero and the left can go to zero when the transmit power is extremely high or low, we can safely say that the left side is always smaller than the right, i.e.,
\begin{equation}\small
\exp \left( { - \frac{{{\rho ^2}}}{2}} \right)\frac{\left(\gamma_t + 1\right)\ln\left( {{\gamma _t} + 1} \right)}{{{\left( {{\gamma _i} + 1} \right)^2}}} < \frac{{\sqrt {2\pi } \sigma _t^2}}{\Xi }\left( {1 - P_f^0 + \mathcal{Q}\left( \rho  \right)} \right).
\end{equation}
Substituting the inequation to \eqref{dCorigin}, we have $\frac{dC}{d\sigma_s^2} > 0$, which indicates that the secondary throughput would increase with the transmit power monotonously.

\item When the solutions of \eqref{simpdC=0} exist, we discuss the sign of $\frac{dC}{d\sigma_s^2}$ piecewise. When the power is low or high enough, the left side is small, while the right remains considerable. The solid red curve~(maximum of the left side of \eqref{simpdC=0}) is below the dash-dotted blue one~(value of the right side of \eqref{simpdC=0}), and $\frac{dC}{d\sigma_s^2} > 0$. When the power is between the two solutions, we have $\frac{dC}{d\sigma_s^2} < 0$. Thus, at the smaller solution, $\frac{{{d^2}C}}{{d{{\left( {\sigma _s^2} \right)}^2}}}<0$, and this is the local optimal transmit power $\widehat{\sigma_s^2}$ to achieve local maximum throughput. Similarly, the larger solution denotes the local minimum of the throughput.
\end{enumerate}

As an example, we plot the curves of the maximum of the left side and the corresponding value of the right side in Fig.~\ref{checksolution}. It is shown that when $\chi^2$ is smaller than 0.86, the maximum of the left is larger than the corresponding value of the right, and \eqref{simpdC=0} will have solutions and power-throughput is likely to exist. When $\chi^2$ is greater than 0.85, there may be no tradeoff between the transmit power and secondary throughput, which is verified by the thick solid line in Fig.~\ref{power_thput}.

\section{Simulation Results}%

In this section, simulation results are presented to evaluate the performance of the proposed LAT protocol. Monte Carlo simulations are performed by varying channel conditions and the PU's state. We set the default values of the simulation parameters as follow: the sample number in each slot $N_s = 300$, the corresponding probability that the PU arrives in a stochastic slot $\mu = 1/500$, and the probability that the PU leaves in any slot $\nu$ as $6/500$. The constraint of collision ratio is set as 0.1, and SNR in sensing $\gamma_s$ is assumed to be -5dB.

\vspace{-0.5em}
\subsection{Power-Throughput Relationship of the LAT Protocol}

\begin{figure*}[ht]
  \centering
  \subfigure[Power-throughput curves in terms of different $\chi^2$.]{
    \label{power_thput}
    \includegraphics[height = 2.65in]{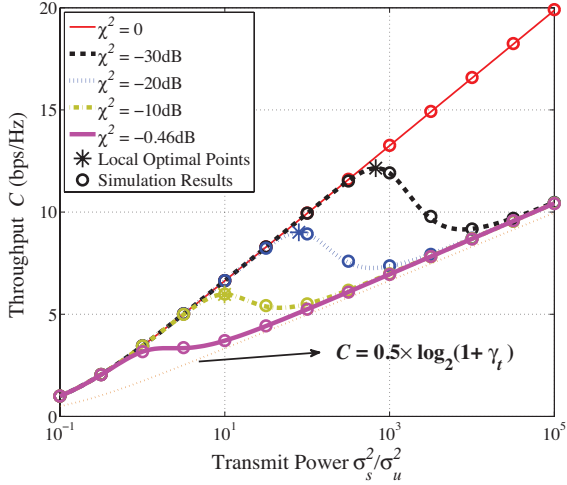}}
    \vspace{-0.5em}
  \hspace{0.6in}
  \subfigure[Existence of the local optimal transmit power.]{
    \label{checksolution}
    \includegraphics[height = 2.65in]{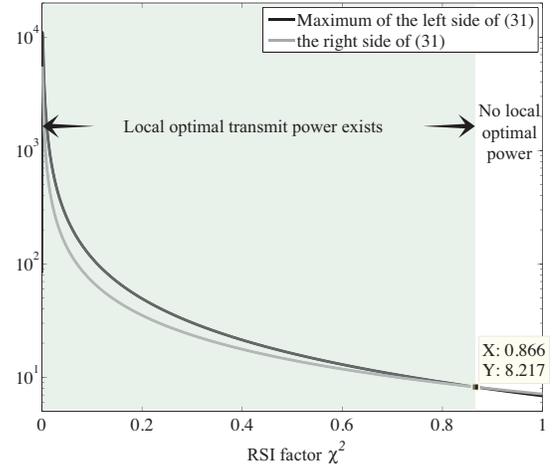}}
  \caption{Power-Throughput Curve in terms of different RSI factor $\chi^2$, where the probability of the PU's arrival $\mu = 1/500$, departure $\nu = 6/500$, the collision ratio $P_c = 0.1$, the sample number of a slot $N_s$ is 300, sensing SNR $\gamma_s = -5$dB.}
  \label{tradeoff} 
\end{figure*}

As is shown in Fig.~\ref{power_thput}, we consider the throughput performance of the LAT protocol in terms of secondary transmit power. The solid and dotted lines represent the analytical performance of the LAT protocol, and the asterisks~(*) denote the analytical local optimal transmit power. The small circles are the simulated results, which match the analytical performance well. The thin solid line depicts the ideal case with perfect RSI cancelation. Without RSI, the sensing performance is no longer affected by transmit power, and the throughput always goes up with the power. This line is also the upperbound of the LAT performance. The thick dash-dotted, dotted and dash lines in the middle are the typical cases, in which we can clearly observe the power-throughput tradeoff and identify the local optimal power, which is calculated from \eqref{simpdC=0}. With the decrease of RSI~($\chi^2$ from -10dB to -20dB to -30dB), the local optimal transmit power increases, and the corresponding throughput goes to a higher level. This makes sense since the smaller the RSI is, the better it approaches the ideal case, and the deterioration cause by self-interference becomes dominant under a higher power. According to Fig.~\ref{checksolution}, when $\chi^2$ is sufficiently large~ (0.85 in the figure), there exists no power-throughput tradeoff. We verify this result by the thick solid line denoting the cases when $\chi^2 = -0.46$dB $=0.9$. No local optimal point can be found in this curve, and the numerical results show that the differentiation is always positive.

One noticeable feature of Fig.~\ref{power_thput} is that when self-interference exists, all curves approach the thin dotted line $C = 0.5\log_2\left(1+\gamma_t\right)$ when the power goes up. This line indicates the case that the spectrum waste is 0.5. When the transmit power is too large, severe self-interference largely degrades the performance of spectrum sensing, and the false alarm probability becomes unbearably high. It is likely that whenever SU$_1$ begins transmission, the spectrum sensing result falsely indicates that the PU has arrived due to false alarm, and SU$_1$ stops transmission in the next slot. Once SU$_1$ becomes silent, it can clearly detect the PU's absence, and begins transmission in the next slot again. And the state of SU$_1$ changes every slot even when the PU does not arrive at all. In this case, the utility efficiency of the spectrum hole is approximately 0.5, which is clearly shown in Fig.~\ref{power_thput}. Also, it can be seen that the larger $\chi^2$ is, the earlier the sensing gets unbearable and the throughput approaches the orange line.

\vspace{-0.5em}
\subsection{Sensing Performance}

\begin{figure}[!t]
\centering
\includegraphics[width=3.6in]{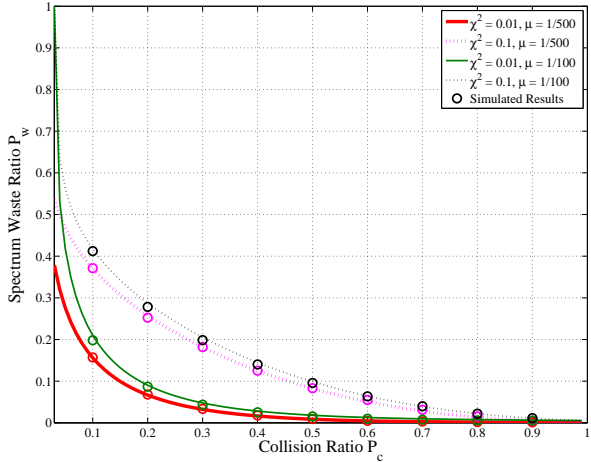}
\vspace{-1.5em}
\caption{ROCs in sensing. In this figure, the probability of the PU's arrival $\mu = 1/500$ and 1/100, departure $\nu = 6\mu$, the sample number of a slot $N_s$ is 300, normalized secondary transmit power $\sigma_s^2/\sigma_u^2 = 10$dB, sensing SNR $\gamma_s = -8$dB, and the RSI factor $\chi^2$ varies between 0.1 and 0.01.} \label{ROCcurve}
\vspace{-1em}
\end{figure}

In this subsection, we use the receiver operating characteristic curves~(ROCs) to present the sensing performance. In Fig.~\ref{ROCcurve}, with the sensing SNR $\gamma_s$ fixed on $-8$dB, we have the relation between the collision ratio and spectrum waste ratio. The thick lines denote the cases when the PU changes its state very slowly, while the fine lines represent the cases when the PU changes comparatively quickly. In Fig.~\ref{ROCcurve}, smaller area under a curve denotes better sensing performance. It can be seen that the thick lines are lower than the corresponding fine lines, which indicates that when the PU changes its state slowly, the spectrum holes can be utilized with higher efficiency. This is because the spectrum waste due to the state change, i.e., re-access and departure of the PU happens less frequently. Also, comparing the solid and dotted lines with the same $\mu$, it can be found that smaller RSI leads to better sensing performance, and the impact of the RSI can be significant. It is noteworthy that the ratio of spectrum waste of the LAT can be quite close to zero if the self-interference can be effectively suppressed, and the constraint of collision ratio is not too strict. However, recall the conventional listen-before-talk, the spectrum waste ratio can theoretically never be suppressed lower than the sensing time ratio in a slot.

\vspace{-0.5em}
\subsection{Impact of the RSI Factor $\chi^2$}
\begin{figure}[!t]
\centering
\includegraphics[width=3.6in]{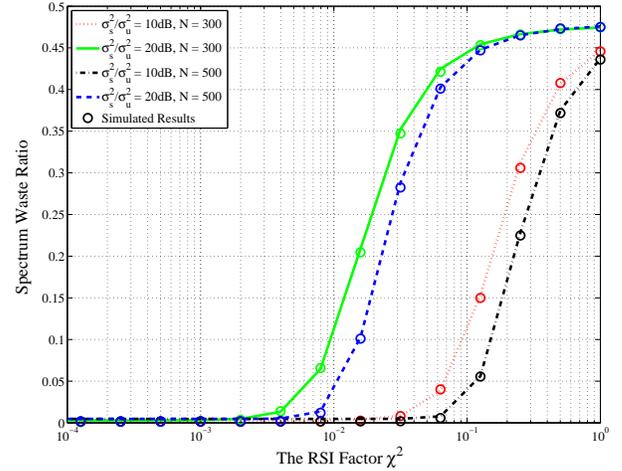}
\vspace{-1.5em}
\caption{Secondary throughput versus the RSI factor $\chi^2$, in which the collision ratio is 0.1, sensing SNR $\gamma_s = -5$dB, the normalized secondary transmit power $\sigma_s^2/\sigma_u^2$ varies between 10dB to 20dB, and the numbers of samples in a slot varies between 300 and 500, with the probability of the PU's arrival per slot $\mu$ varies between 1/500 and 1/300, departure probability per slot $\nu$ varies between 6/500 and 6/300.} \label{X_Pw}
\vspace{-1em}
\end{figure}

In Fig.~\ref{X_Pw}, we consider the impact of the RSI factor on the sensing performance. We fix the constraint of $P_c$ as 0.1, and evaluate the spectrum waste ratio under various $\chi^2$. It can be seen from Fig.~\ref{X_Pw} that with the increase of $\chi^2$, the spectrum waste ratio increases from zero to approximately 0.5. This is reasonable in the sense that with sufficiently small RSI factor, the RSI can be neglected compared with PU's signal and noise, and SUs can fully utilize the spectrum holes. When the RSI factor is moderate or close to 1, which indicates that the RSI cannot be suppressed well, the secondary signal may overwhelm the PU's signal, leading to unreliability of sensing, and the SUs are likely to stop communication due to false alarm. Note that the asymptotic value of the spectrum waste ratio when the RSI is large is 0.5, which is in accordance with the results in Fig.~\ref{power_thput}.

Besides, it can be seen that when the normalized power of RSI~($\chi^2\sigma_s^2/\sigma_u^2$) ranges from approximately [0.1, 10], the spectrum waste ratio changes fast, and when the normalized power of RSI is below 0.1, the waste ratio remains at a low level. This feature can be utilized to design the protocol parameters to achieve full utilization of the spectrum holes. Also, when the PU's state change rate remains unchanged while the slot length enlarges, it can be seen that the sensing performance becomes better, especially at the points when the normalized power of RSI ranges from approximately [0.1, 10], where more samples in a slot would help improve sensing performance significantly.

\section{Conclusions}%

In this paper, we proposed a LAT protocol that allows SUs to simultaneously sense and access the spectrum holes. Taken the impact of the residual self-interference on sensing performance into consideration, we designed an adaptively-changed sensing threshold for energy detection. Spectrum utilization efficiency and secondary throughput under the LAT protocol has been provided in closed-form, based on which a unique tradeoff between the secondary transmit power and the secondary throughput has been reported, i.e., the increase of transmit power does not always yield the improvement of SU's throughput, and a mediate value is required to achieve the local optimal performance. Simulation results have verified the existence of the power-throughput tradeoff, and shown that the SUs can efficiently utilize the spectrum holes under the LAT protocol.

The proposed LAT protocol has the potential to allow the FD SUs to fully utilize the spectrum holes, given that the SUs no longer need to periodically suspend their transmission for sensing, and can react promptly to the spectrum opportunity. Besides the basic model considered in this paper, the LAT protocol can be readily extended to many other CR scenarios, like the multi-user and multi-channel cases. With simultaneous sensing and transmission, the collision between multiple SUs is likely to be shorten, and the performance of the whole secondary network is likely to enjoy a significant improvement.

\appendices

\section{Derivation of Table.~\ref{FDGaussian}}

We first provide the general properties of the test statistics. Given that each $y(n)$ in \eqref{test} is i.i.d., the mean and the variance of $M$ can be calculated as
\begin{equation}
\mathbb{E}\left[M\right] = \mathbb{E}\left[\left|y\right|^2\right];~{\rm{var}}\left[M\right] = \frac{1}{N_s}{\rm{var}}\left[\left|y\right|^2\right].\nonumber
\end{equation}
Further, if the received signal $y$ is complex-valued Gaussian with mean zero and variance $\sigma_y^2$, we have
\begin{equation}
\mathbb{E}\left[M\right] = \sigma_y^2,\nonumber
\end{equation}
and
\begin{equation}\label{CSCG}
{\rm{var}}\left[M\right] = \frac{1}{N_s}\left(\mathbb{E}\left[\left|y\right|^4\right]-\sigma_y^4\right)=\frac{\sigma_y^4}{N_s}.
\end{equation}

Then we consider the concrete form of the received signal under each hypothesis. In the LAT protocol, given the PU signal, RSI, and i.i.d. noise, the received signal $y$ is complex-valued Gaussian with zero mean. The variance of $y$ under the four hypothesises are as follow:
\begin{equation}
\sigma_y^2 = \left\{ {\begin{array}{*{20}{c}}
&{\left(1+\gamma_s\right)\sigma _u^2}~~~~&\mathcal{H}_{01},\\
&{\left(1+\gamma_i\right)\sigma _u^2}~~~~&\mathcal{H}_{10},\\
&{\sigma _u^2}~~~~&\mathcal{H}_{00},\\
&{\left(1+\gamma_s + \gamma_i\right)\sigma _u^2}~~~~&\mathcal{H}_{11}.
\end{array}} \right.
\end{equation}
By substituting them into \eqref{CSCG}, we can obtain the results in Table~\ref{FDGaussian}.

\section{Derivation of the Optimal Transmit Power}
The optimal power $\widehat{\sigma_s^2}$ satisfies
\begin{equation}\label{dC}
{\left. {\frac{{dC}}{{d\sigma _s^2}}} \right|_{\widehat {\sigma _s^2}}} = 0.
\end{equation}

The differentiation of the secondary throughput can be derived as shown in \eqref{dCorigin} at the top of next page, and with $\frac{dC}{d\sigma_s^2} = 0$, equation \eqref{dc=0_long} can be obtained,
\begin{figure*}
\begin{equation}\label{dCorigin}
\begin{split}
\frac{dC}{d\sigma_s^2} =& - {\log _2}\left( {{\gamma _t} + 1} \right) \cdot \frac{{\exp \left( { - \frac{{{\rho ^2}}}{2}} \right) \cdot \left( {\frac{1}{\mu } - 1} \right) \cdot \frac{{{\gamma _s}{\chi ^2}\left( {{\mathcal{Q}^{ - 1}}\left( {1 - {P_m}} \right) + \sqrt {{N_s}} } \right)}}{{{{\left( {{\gamma _i} + 1} \right)}^2}}}}}{{\sqrt {2\pi } {{\left[ {\left( {\frac{1}{\mu } - 1} \right) \cdot \left( {\mathcal{Q}\left( \rho  \right) - P_f^0 + 1} \right) + 1} \right]}^2}}} \cdot \left( {\left( {\frac{1}{\mu } - 1} \right) \cdot \left( {1 - P_f^0} \right) + {P_m}} \right)\\
 &- \frac{{\sigma _t^2}}{{\ln 2 \cdot \left( {{\gamma _t} + 1} \right)}} \cdot \left( {\frac{\mu }{2} + \frac{{\mathcal{Q}\left( \rho  \right) \cdot \left( {\frac{1}{\mu } - 1} \right) - {P_m} + 1}}{{\left( {\frac{1}{\mu } - 1} \right) \cdot \left( {\mathcal{Q}\left( \rho  \right) - P_f^0 + 1} \right) + 1}} - 1} \right),
\end{split}
\end{equation}
where $\rho = \mathcal{Q}^{-1}\left(1-P_m\right)\cdot\left(\frac{\gamma_s}{\gamma_i+1}+1\right)+\frac{\gamma_s}{\gamma_i+1}\sqrt{N_s}$, i.e., $\mathcal{Q}\left(\rho\right) = P_f^1$.
\end{figure*}
\begin{figure*}[ht]
\begin{equation}\label{dc=0_long}
\begin{split}
&\ln \left( {{\gamma _t} + 1} \right) \cdot \frac{{\exp \left( { - \frac{{{\rho ^2}}}{2}} \right) \cdot \left( {\frac{1}{\mu } - 1} \right) \cdot \frac{{{\gamma _s}{\chi ^2}\left( {{\mathcal{Q}^{ - 1}}\left( {1 - {P_m}} \right) + \sqrt {{N_s}} } \right)}}{{\left( {{\gamma _i} + 1} \right)^2}}}}{{\sqrt {2\pi } {{\left[ {\left( {\frac{1}{\mu } - 1} \right) \left(\mathcal{Q}\left(\rho\right)-P_f^0+1\right)  + 1} \right]}^2}}} \cdot \left( {\left( {\frac{1}{\mu } - 1} \right) \cdot \left( {1 - P_f^0} \right) + {P_m}} \right)\\
&+ \frac{\sigma _t^2}{\gamma_t + 1} \cdot \left( {\frac{\mu }{2} + \frac{{\left( {\frac{1}{\mu } - 1} \right) \cdot \mathcal{Q}\left( \rho  \right) - {P_m} + 1}}{{\left( {\frac{1}{\mu } - 1} \right) \left(\mathcal{Q}\left(\rho\right)-P_f^0+1\right)  + 1}} - 1} \right) = 0,
\end{split}
\end{equation}
\hrulefill
\end{figure*}
which can be simplified as
\begin{equation}\small\label{dC=0}
\ln \left( {{\gamma _t} + 1} \right) \frac{{\exp \left( { - \frac{{{\rho ^2}}}{2}} \right) \left( {\frac{1}{\mu } - 1} \right) \Xi}}{{\sqrt {2\pi } \left(\gamma_i+1\right)^2\alpha }}\kappa  + \frac{\sigma _t^2}{\gamma_t+1} \left( {\frac{\mu }{2} - \kappa } \right) = 0.
\end{equation}

When $\mu$ is sufficiently small, the notations can be simplified as
\begin{equation}
\begin{split}
&\alpha  = \frac{1}{\mu } \cdot \left( {\mathcal{Q}\left( \rho  \right) - P_f^0 + 1} \right),\\
&\kappa  = \frac{{  {1 - P_f^0} }}{{ \mathcal{Q}\left(\rho\right)-P_f^0+1}},
\end{split}
\end{equation}
and \eqref{dC=0} becomes
\begin{equation}
\frac{\ln \left( {{\gamma _t} + 1} \right)}{\left(\gamma_i + 1\right)^2} \cdot \frac{{\exp \left( { - \frac{{{\rho ^2}}}{2}} \right) \cdot \Xi}}{{\sqrt {2\pi } \left( {\mathcal{Q}\left( \rho  \right) - P_f^0 + 1} \right)}} \cdot \kappa  - \frac{\sigma _t^2\kappa}{\gamma_t + 1}  = 0,
\end{equation}
i.e.,
\begin{equation}
\exp \left( { - \frac{{{\rho ^2}}}{2}} \right)\frac{\left(\gamma_t + 1\right)\ln\left( {{\gamma _t} + 1} \right)}{{{\left( {{\gamma _i} + 1} \right)^2}}} = \frac{{\sqrt {2\pi } \sigma _t^2}}{\Xi }\left( {1 - P_f^0 + \mathcal{Q}\left( \rho  \right)} \right).
\end{equation}

\end{document}